\documentclass[sigconf,noacm]{acmart}
\usepackage{bbding}

\usepackage[T1]{fontenc} 
\usepackage[utf8]{inputenc} 
\usepackage[english]{babel}
\usepackage{stfloats} 
\usepackage{booktabs,subcaption,amsfonts,dcolumn}
\usepackage{soul}
\usepackage{paralist}
\usepackage{enumitem}
\usepackage{textcase} 
\usepackage{filemod} 
\usepackage{etoolbox} 
\usepackage{mathtools}
\usepackage{flafter} 
\makeatletter
\def\@copyrightspace{\relax}
\makeatother
\setcopyright{none}
\renewcommand\footnotetextcopyrightpermission[1]{} 
\makeatletter
\renewcommand\@formatdoi[1]{\ignorespaces}
\makeatother

\usepackage{amsthm}
\theoremstyle{plain}

\newtheorem*{theorem*}{Theorem}
\usepackage{float}
\usepackage{multirow}
\makeatletter
\usepackage{tabularx}

\numberwithin{equation}{section}
\allowdisplaybreaks

\def\@noindentfalse{\global\let\if@noindent\iffalse}
\def\@noindenttrue {\global\let\if@noindent\iftrue}
\def\@aftertheorem{%
  \@noindenttrue
  \everypar{%
    \if@noindent%
      \@noindentfalse\clubpenalty\@M\setbox\z@\lastbox%
    \else%
      \clubpenalty \@clubpenalty\everypar{}%
    \fi}}

\theoremstyle{plain}
\newtheorem{theorem}{Theorem}[section]
\AfterEndEnvironment{theorem}{\@aftertheorem}
\newtheorem{definition}[theorem]{Definition}
\AfterEndEnvironment{definition}{\@aftertheorem}

\AfterEndEnvironment{lemma}{\@aftertheorem}

\AfterEndEnvironment{corollary}{\@aftertheorem}

\theoremstyle{definition}

\AfterEndEnvironment{remark}{\@aftertheorem}

\AfterEndEnvironment{example}{\@aftertheorem}
    
\AfterEndEnvironment{proof}{\@aftertheorem}

\setcopyright{none}
\renewcommand\footnotetextcopyrightpermission[1]{}
\usepackage[linewidth=1pt]{mdframed}
\let\@@todo\todo
\def\todo#1{\@@todo[color=red,backgroundcolor=red!10,size=\tiny]{#1}}

\usepackage{delimset}
\def\given{\mskip 0.5mu plus 0.25mu\vert\mskip 0.5mu plus 0.15mu}
\newcounter{bracketlevel}%
\def\@bracketfactory#1#2#3#4#5#6{%
\expandafter\def\csname#1\endcsname##1{%
\global\advance\c@bracketlevel 1\relax%
\global\expandafter\let\csname @middummy\alph{bracketlevel}\endcsname\given%
\global\def\given{\mskip#5\csname#4\endcsname\vert\mskip#6}\csname#4l\endcsname#2##1\csname#4r\endcsname#3%
\global\expandafter\let\expandafter\given\csname @middummy\alph{bracketlevel}\endcsname%
\global\advance\c@bracketlevel -1\relax%
}%
}
\def\bracketfactory#1#2#3{%
\@bracketfactory{#1}{#2}{#3}{relax}{0.5mu plus 0.25mu}{0.5mu plus 0.15mu}
\@bracketfactory{b#1}{#2}{#3}{big}{1mu plus 0.25mu minus 0.25mu}{0.6mu plus 0.15mu minus 0.15mu}
\@bracketfactory{bb#1}{#2}{#3}{Big}{2.4mu plus 0.8mu minus 0.8mu}{1.8mu plus 0.6mu minus 0.6mu}
\@bracketfactory{bbb#1}{#2}{#3}{bigg}{3.2mu plus 1mu minus 1mu}{2.4mu plus 0.75mu minus 0.75mu}
\@bracketfactory{bbbb#1}{#2}{#3}{Bigg}{4mu plus 1mu minus 1mu}{3mu plus 0.75mu minus 0.75mu}
}
\bracketfactory{clc}{\lbrace}{\rbrace}
\bracketfactory{clr}{(}{)}
\bracketfactory{cls}{[}{]}
\bracketfactory{abs}{\lvert}{\rvert}
\bracketfactory{norm}{\Vert}{\Vert}
\bracketfactory{floor}{\lfloor}{\rfloor}
\bracketfactory{ceil}{\lceil}{\rceil}
\bracketfactory{angle}{\langle}{\rangle}

\let\original@left\left
\let\original@right\right
\renewcommand{\left}{\mathopen{}\mathclose\bgroup\original@left}
\renewcommand{\right}{\aftergroup\egroup\original@right}

\newcounter{ctr}\loop\stepcounter{ctr}\edef\X{\@Alph\c@ctr}%
	\expandafter\edef\csname s\X\endcsname{\noexpand\mathscr{\X}}
	\expandafter\edef\csname c\X\endcsname{\noexpand\mathcal{\X}}
	\expandafter\edef\csname b\X\endcsname{\noexpand\boldsymbol{\X}}
	\expandafter\edef\csname I\X\endcsname{\noexpand\mathbb{\X}}
\ifnum\thectr<26\repeat

\let\@IE\IE\let\IE\undefined
\newcommand{\IE}{\mathop{{}\@IE}\mathopen{}}
\let\@IP\IP\let\IP\undefined
\newcommand{\IP}{\mathop{{}\@IP}}

\def\^#1{\relax\ifmmode {\mathaccent"705E #1} \else {\accent94 #1}\fi}
\def\~#1{\relax\ifmmode {\mathaccent"707E #1} \else {\accent"7E #1}\fi}
\def\*#1{\relax#1^\ast}
\edef\-#1{\relax\noexpand\ifmmode {\noexpand\bar{#1}} \noexpand\else \-#1\noexpand\fi}
\def\>#1{\vec{#1}}
\def\.#1{\dot{#1}}

\def\atop{\@@atop}

\renewcommand{\leq}{\leqslant}
\renewcommand{\geq}{\geqslant}
\renewcommand{\phi}{\varphi}

\newcommand\indep{\protect\mathpalette{\protect\@indep}{\perp}}
\def\@indep#1#2{\mathrel{\rlap{$#1#2$}\mkern2mu{#1#2}}}

\def\parsetime#1#2#3#4#5#6{#1#2:#3#4}
\def\parsedate#1:20#2#3#4#5#6#7#8+#9\empty{20#2#3-#4#5-#6#7 \parsetime #8}
\def\moddate{\expandafter\parsedate\pdffilemoddate{\jobname.tex}\empty}

\makeatother

\usepackage{algorithm}
\usepackage{algpseudocode}
\usepackage{float}
\usepackage{url}
 \usepackage{hyperref} 

\usepackage{amsmath}

\author{De Zhang Lee}
\affiliation{%
  \institution{National University of Singapore}
  \country{Singapore}
}

\author{Han Fang}
\affiliation{%
  \institution{National University of Singapore}
  \country{Singapore}
}

\author{Ee-Chien Chang}
\affiliation{%
  \institution{National University of Singapore}
  \country{Singapore}
}

\title{Enhancing Stateful Detection of Adversarial Attacks with Soft-labels' Temporality and Robust Similarity Approximations}
\begin{abstract}

Stateful Detection (SD) mitigates adversarial attacks by determining whether a sequence of  queries  contains  queries from a black-box adversary. Recent works, such as Blacklight and  PIHA utilize query similarity to detect such queries. In this paper,  we observe that temporal information, in particular, the temporal correlation of the classification soft labels,  is a prominent characteristic of adversarial attacks and can be leveraged to reduce false positive rates.
Moreover, we  point out a potential vulnerability in SD implementation.  Many SD systems identify similar queries according  to some implicit, computationally expensive  metric.  To improve efficiency,  these systems often adopt an approximate similarity function as substitute. This discrepancy could be exploited  by  crafting queries that appear dissimilar under the approximation but are close in the intended metric, thereby evading detection. We refer to this as an ``adversarial attack'' on the approximation function, and 
demonstrate  it   through a lightweight attack on Blacklight's similarity function.

Based on the above observations, we propose  a two-phase approach. The first phase  identifies  subsequences of queries with high similarity,  incorporating   randomness  to prevent the aforementioned ``adversarial attacks''. The second phase  analyzes temporal correlation of the soft-labels to further validate the presence of the adversary's queries. 
Experimental results show that the framework detects adversarial queries generated by Boundary Attack, HSJA, SimBA, Square Attack with true positive rate (TPR) reaching 1.00, while maintaining a false positive rate (FPR) of at most 0.06. Additionally, the method is  robust against OARS which is an adaptive attack. 
\end{abstract}

\begin{document}

\maketitle


\section{Introduction}


\begin{figure*} 
    \centering
    \includegraphics[width=\linewidth]{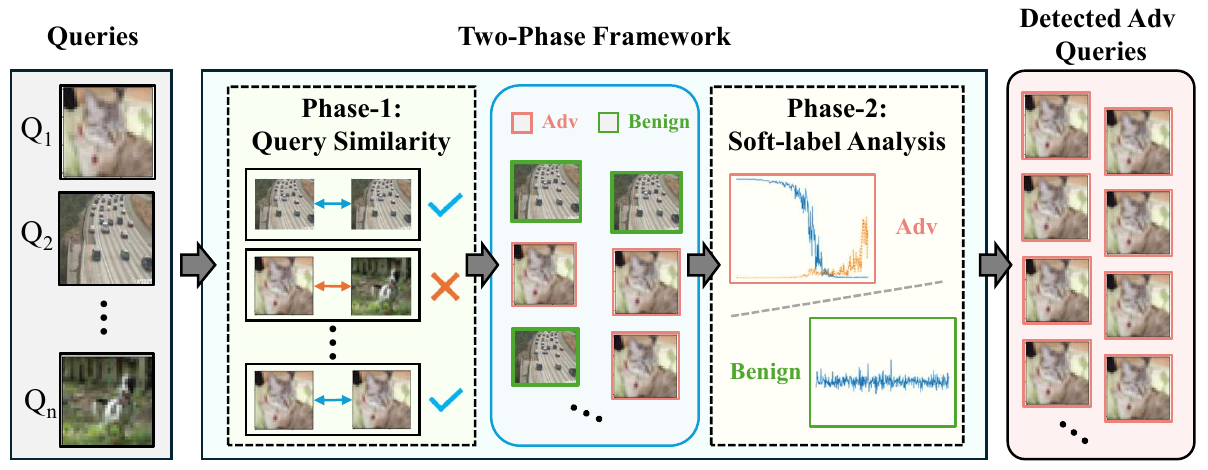}
    \caption{Our proposed two-phase detection framework, which identifies adversarial queries
    within a larger query sequence through a two-phase process. }
    \label{fig:intro_method}
\end{figure*}

Deep neural networks (DNNs) have been shown to be highly susceptible to adversarial attacks, which are carefully crafted imperceptible perturbations
that cause incorrect classification \cite{szegedy2014intriguingpropertiesneuralnetworks}. 
  Even in  the restrictive black-box setting,  these attacks  remain  effective, though they typically require large number of quires to probe the model's behavior. 
  This reliance on frequent queries leads to a defense approach, 
known as Stateful Detection (SD).
SD identifies adversarial queries based on their similarity.
For instance, Chen et. al. \cite{chen2020stateful} measures similarity based on $\ell_2$ distance in a lower dimensional encoded space, whereas 
Blacklight \cite{281294} and PIHA \cite{choi2023piha} evaluate 
query similarity via hashing. Once adversarial attack queries
are detected, the defender can take follow-up actions such as banning the accounts involved,
reporting such instances to authorities, 
performing forensic investigations on these queries, etc.  

Various black-box adversarial attacks have been developed to evade SD methods that are based on query similarity. For example, 
Oracle-guided Adaptive Rejection Sampling (OARS) \cite{Feng_2023} demonstrates
that it is possible to add random perturbations to adversarial attack queries to evade detection. 
This poses a challenge to SDs which
rely on query similarity (e.g. Blacklight) as a much lower similarity threshold is required, which
incurs a higher false positive rate. 
Furthermore, black-box adversarial attacks on features in the frequency domain 
are empirically shown to bypass Blacklight detection \cite{li2022decision}.

Many SDs  measure similarity in a reduced dimensional space or by fast comparison of  hashes, which  could be viewed as efficient approximations of some implicit computationally expensive distance function. To illustrate, performance of the hashing methods are often empirically evaluated  on various geometrically transformed images (e.g. translation, cropping, etc), suggesting that the intended similarity is some implicit function  that  maximizes over the space of possible geometric transformations. The implicit  similarity is computationally expensive  and  thus it is crucial to have an efficient approximation, for instance, through matches of  hashes.
We highlight that the efficient approximations, if not robust, are susceptible to ``adversarial attacks'' in the sense
that an adversary may construct queries that are similar in the implicit metric, but are approximated
as non-similar. We illustrate this with a lightweight ``adversarial attack''
on Blacklight. By incorporating  with the Boundary Attack \cite{brendel2017decision},
this attack could derive 
adversarial examples with small distortion  while evading Blacklight detection. Compared to OARS \cite{Feng_2023}, this lightweight attack does not require a large number of probes on the SD.
    In response to the above,   we formulate the underlying distance function,   and  propose employing probabilistic hashing to approximate the distance function, while hiding the randomness  from the adversary to mitigate adversary attack.

Beside similarity in the query space, we observe that
adversarial query sequence exhibits some form of temporal correlation among  the query results. For instance, a clean surveillance video would likely contain many similar frames, which could be wrongly detected as attack. However,  the soft labels of these clean frames are expected to be independent to each other temporally  which is not the case for queries generated by the black-box attack. This is because the attack adaptively generate the query based on the decision which relates to the soft-labels, and thus would generate queries that depend on early queries. Such temporal correlation could be used to lower the false detection rate.


With these observations, we derive a two-phase detection
framework, as illustrated in Figure \ref{fig:intro_method}. Given a sequence of queries, 
the first phase flags out subsequences of similar queries using the probabilistic hashing. These subsequences will
be passed on to the second phase, where the soft-label behavior of these queries
will be analyzed for any temporal behavior. 


We evaluate the efficacy of the proposed
framework with existing state of the art query-based black-box attacks, as well as
adaptive 
attacks that are specifically designed to evade detection 
(e.g. OARS, and the attack described in \cite{li2022decision}).
The experiments shows that the framework
successfully detects queries from such attacks with high accuracy, and is robust against adaptive probing
using OARS.


Our contributions are as follows, 
\begin{enumerate}
    \item We observe 
    that soft-labels of adversarial attack query sequences exhibits temporal behavior,
    which can be used to distinguish between benign and adversarial query sequences. 
    \item 
    The notion of  ``adversarial attacks''
    on similarity metrics is introduced, and we demonstrate
    such a lightweight attack on the similarity metric employed by Blacklight. 
    \item This paper proposes a similarity function and an approximation (salted randomized quantization) that is robust.
    \item  We propose a two-phase framework to identify adversarial query sequences within a larger query sequence by leveraging both query similarity and the temporal correlation of soft-labels. Current stateful detection (SD) methods that rely solely on query similarity often require a high threshold for the similarity function to minimize the false positive rate. However, this approach leaves them vulnerable to adaptive probing attacks, such as OARS. By incorporating the temporal correlation of soft-labels, we can adopt a lower threshold for feature similarity, thereby mitigating the risk of adaptive attacks. The soft-label is used to filter out the increased false positives incured. 
    \item The performance of the proposed method is empirically evaluated. 
    In particular, 
    on sequences with both benign and malicious queries, 
    we 
     achieve a high true positive rate (1.00, v.s. 0.97 in other SOTA SD) when identifying adversarial query sequences, 
    and a low false positive rate (0.06, v.s. 0.42 in other SOTA SD). Furthermore, our proposed SD incurs modest time and memory overhead, making it a practical and scalable defense. 
\end{enumerate}
This paper is organized as follows: Section \ref{section:background} gives an overview of black-box adversarial
attacks and SDs. Section \ref{section:threat_model} presents the threat model and defense's goal. 
Some properties of adversarial query sequences are highlighted in Section \ref{section:query_property}, 
and a detection framework is proposed in Section
\ref{section:proposed_framework}.
We evaluate our proposed framework in Section \ref{section:experiment}.

\section{Background} \label{section:background}
This section gives a brief overview and discusses prior work on black-box adversarial attacks and
some existing defense methods.

\subsection{Black-box Adversarial Attacks}
Most state of the art deep neural network (DNN) classifers are susceptible to adversarial attacks, 
in the form small imperceptible perturbations added to inputs which result in image misclassification \cite{szegedy2014intriguingpropertiesneuralnetworks}. 
Adversarial attacks can be broadly classified as white or black-box attacks. In the white box setting, 
the adversary has access to detailed model information, including the model architecture and weights, 
whereas in the black-box setting, the adversary only has query access to the model (for example, 
through an API) and receives either the class label (i.e. hard-labels) or the classification
probabilities (i.e. soft-labels). 
There are three main types of black box attacks, a brief explanation of each type is given below. 
\begin{enumerate}
    \item \textbf{Transfer Based Attacks}. The adversary will attempt to train a surrogate model
    which will approximate the behavior of the target model. White-box adversarial attacks on
    the surrogate model have been empirically shown to be transferable to the target model, even
    when the exact architecture of the target model is unknown. The drawbacks of this method is that
    (i) the training data used to train the surrogate model must closely resemble the training data
    of the target model, and (ii) it is computationally expensive to construct surrogate models for target models trained on complicated models (e.g. models trained on ImageNet).
    \item \textbf{Score Based Attacks}. The adversary approximates the gradients using the 
    classification probabilities outputted by the model. This method has been empirically shown to 
    provide good gradient estimates, which can be used to perform white-box attacks. However, 
    a defender can easily defend against this by either truncating or adding noise to the output
    classification probabilities. 
    \item \textbf{Decision Based Attacks}. The adversary approximates the decision boundaries
    using only the classification outputted by the model. This attack requires the least
    amount of information, and is the most realistic threat model. The drawback of
    decision based attacks are that they require a large number of highly similar queries, often
    up to tens of thousands. 
\end{enumerate}
In this study, we will focus on both decision and score
decision based black-box attacks as they represent the most
realistic attack setting. 

\subsection{Existing Defenses}
There are several 
existing and recent defences against decision and score based 
black-box attacks. They rely 
on the observation that these attacks require
an attacker to probe the model with similar queries. We will
briefly describe some of these defenses.
\begin{enumerate}
 \item \textbf{Account Based Query Similarity Detection \cite{chen2020stateful}} 
    This paper introduces the notion of
    stateful detection, where information associated with the query state (e.g. account ID) 
    can be used together with query similarity to identify adversarial query sequences. A query 
    encoder is used to reduce queries into a latent space representation, where similarity checking is performed. 
    In its original implementation, it can be bypassed if the adversary has access to multiple accounts. 

\item \textbf{Blacklight \cite{281294}} 
Blacklight operates by generating a set of hash fingerprints for each query. 
This set is constructed by applying a hash function to a sliding window of fixed size across the query. 
This hash fingerprint set is then compared against
a global database of past hash fingerprints, and flagged as malicious if
there is a significant overlap. For efficiency, the authors empirically
show that it suffices to use a sliding window with no overlap, 
and compare only a randomly chosen subset of the hash
fingerprint set with the global database. However, there are known
decision based black-box attacks which have been shown to bypass
Blacklight, for example \cite{li2021f}. Furthermore, we managed to evade
Blacklight through a modified boundary attack \cite{brendel2017decision}, details 
of this attack on Blacklight is presented in Section \ref{subsubsection:adv_attack_similarity}.

\item \textbf{Boundary Detection \cite{li2022decision}} Boundary detection detects
adversarial queries through a key insight that adversarial examples are likely
located close to the decision boundary of the classifier. A query, $x$,
is deemed to
be close to the decision boundary if $p_1(x) \approx p_2(x)$, where $p_1$ and
$p_2$ are the top-1 and top-2 class probabilities, respectively. Therefore, 
a sequence of queries $x_1, \dots, x_n$ is likely adversarial if
$p_1(x_k) \approx p_2(x_k)$ for a significant number of $k\in\{1,\dots,n\}$.
Boundary detection works at the account level, by flagging out accounts
with a large number of such queries. 
\end{enumerate}
Both Stateful Detection and Blacklight rely
on feature similarity, which means that a series of similar benign queries (such 
as stills from a CCTV) may flagged as malicious. 
Account level defenses can be bypassed if an attacker has access to multiple accounts.
In Boundary Detection, if an adversary manages to submit queries near the decision boundary, they are likely close to creating
a successful adversarial example.

\section{Threat Model and Problem Setup} \label{section:threat_model}

The central problem we are interested in tackling is identifying queries arising from adversarial attacks, 
amidst an overall query sequence consisting of both legitimate and adversarial queries. 

\subsection{Threat Model}
The adversary has black-box access to a machine learning model, on which the adversary can adaptively query
and receive the model output.   
The eventual goal of the attacker is to construct an adversarial example by adaptively querying the model. 
Given a clean input $x$, the attacker wants to find 
an adversarial example $\Tilde{x}$, where $x$ and $\Tilde{x}$ are classified differently, but $\Tilde{x}$ is similar to  $x$ w.r.t. some distance metric, i.e. 
$d (x ,\Tilde{x} ) < \epsilon $ for some distance metric $d$ and fixed $\epsilon>0$.
We assume that the attacker employs a query-based black-box adversarial attack algorithm
to construct the adversarial example. 

To obfuscate his adversarial intent, the adversary may hide the query sequence 
in another sequence of benign queries. In the case of the model being queried
through an online API, besides submitting both benign and malicious queries, 
the attacker could also submit his queries through
multiple accounts to further evade detection. Hence, we also consider scenarios where a smaller subset of the  queries is present in the sequence.

\subsection{Defender's Goal}
The defender has total control over the model, including the full model 
classification outputs
and queries received. Given a sequence of queries received,
the defender's goal is to identify query subsequences that constitute part of
an adversarial attack. 
The effectiveness of the defense can be measured
by:
\begin{enumerate}
\item Given a sequence, determine whether it consists of an attack (true positives vs false positive). The defender does not know the proportion of attack queries within the sequence, if any.  
\item Given a sequence that is known to contain an attack, identify the queries that are the attack.   Similarly, the defender does not know the proportion of attack queries. 
\end{enumerate}



\section{Properties of Queries} \label{section:query_property}

Under the threat model, the defender does not know which black-box attack is employed by the adversary. Nonetheless, most known black-box attacks employ some efficient searching strategy
that adaptively queries a model while locating the decision boundaries. This leads to the following
two characteristics in the resulting query sequence, that is (i) the queries are similar,
and (ii) the soft-labels of the query sequence exhibits temporal correlation.



\subsection{Query Similarity} \label{subsection:similarity_detection_mechanism}



 For a given SD, we consider two distance measurement functions: 
 the intended similarity metric, and its corresponding similarity approximation function.

 \subsubsection*{Motivation}
 
Known black-box attacks typically generate ``similar'' queries to explore the decision boundary. While there does not exist a universal
query similarity metric among such attacks, we observe that many attacks (e.g. \cite{brendel2017decision,guo2019simple,andriushchenko2020square}), either implicitly or explicitly,  employ normalized $\ell_2$ norm, which
results in queries similar in $\ell_2$. 

Although the normalized $\ell_2$ norm is commonly adopted by the attacks, 
most SDs do not directly employ the same metric. 
This is likely due to the following concerns.
\begin{enumerate}
    \item \textbf{Security against evasion.} An adversary may attempt to evade traditional similarity norms by 
    retaining important region while changing the rest of the query. For example, when the queries are images, the attacker
    may keep and/or translate the sub-image corresponding to the region of concern, while varying the background, hoping that the results transfer across
    the queries. Therefore, a distance function that measures distance between ``key regions'' across the queries could detect such queries. 
    \item \textbf{Computational efficiency.} It is computationally expensive to compute the exact distance metric, and thus some
    efficient approximation is employed (e.g. Blacklight employs a randomized hash on the quantized queries, and PIHA employs a perceptual hash). 
\end{enumerate}

 \subsubsection*{Intended Similarity Function} \label{subsubsection:intended_similarity}
Let us consider the following similarity metric in Definition \ref{dfn:intended}. 
\begin{definition}[\emph{Intended Similarity Function}] \label{dfn:intended}
    Given two images
${\bf x}$  and ${\bf y} $, let $S_{\bf x}$ and $S_{\bf y}$ be the set of all subimages (of some pre-deteremined dimension and size $w$) in $\bf x$ and ${\bf y}$ respectively,  where   a subimage of $\bf x$ is the image within a smaller square window in ${\bf x}$. Let us define
\[
    D( {\bf x}, {\bf y}) = \min_{x' \in S_{\bf x}, y' \in S_{\bf y}} d(x', y'),
\]
where $d$ is some predetermined similarity metric, for example, 
$\ell_2$ distance.
\end{definition}

The intended similarity function is the 
smallest  $\ell_2$ distance between pairs of subimages from ${\bf x}$ and ${\bf y}$ of the predetermined size and dimensions (For example, in Blacklight, this refers to rectangles that have a length of one unit and a specific, predetermined width). 
$D$ will be the intended similarity metric used in our two-phase framework.

\subsubsection*{Approximated  Similarity Function}
In practice, computing the intended similarity metric is impractical due to the high computational cost of known subimage-matching algorithms \cite{DBLP:journals/corr/abs-2408-16445}. Consequently, existing stateful detection algorithms (e.g. Blacklight and PIHA) rely on efficient approximations to estimate this similarity metric, stated in Definition \ref{dfn:approx}. 

\begin{definition}[\emph{Approximated Similarity Function}] \label{dfn:approx}
    Given positive integers $r,w$ and $k$, let ${\bf x}$ and ${\bf y}$ be two images, and let $S_{\bf x}$ and $S_{\bf y}$ denote the sets of all subimages of pre-determined dimensions and size $w$ extracted from ${\bf x}$ and ${\bf y}$, respectively. For any two distinct subimages $x, x' \in S_{\bf x}$, the maximum overlap between $x$ and $x'$ is bounded $k$, where $k < w$. 
    Let $R_{\bf x} \subseteq S_{\bf x}$ and $R_{\bf y} \subseteq S_{\bf y}$ represent subsets of $r$ uniformly sampled subimages from $S_{\bf x}$ and $S_{\bf y}$, respectively. The \emph{approximated similarity function}, denoted $\hat{D}_{r,w,k}({\bf x}, {\bf y})$, is defined as:
    \[
        \hat{D}_{r,w,k}({\bf x}, {\bf y}) = \min_{x' \in R_{\bf x}, y' \in R_{\bf y}} d(x', y'),
    \]
    where $d(\cdot, \cdot)$ is the similarity metric used in the intended similarity function.
\end{definition}
Since $\hat{D}_{r,w,k}({\bf x}, {\bf y})$ is a probabilistic function, we show that on expectation, under mild assumptions, $\hat{D}_{r,w,k}({\bf x}, {\bf y})$ closely approximates the intended similarity function $D( {\bf x}, {\bf y})$. In Theorem \ref{thm:similarity_metric_bound}, we provide the relationship between the expectation of the probabilistic approximation function $\IE[\hat{D}_{r,w,k}({\bf x}, {\bf y})]$ and the intended similarity function  $D( {\bf x}, {\bf y})$. In both cases, the expected bias between $\IE[\hat{D}_{r,w,k}({\bf x}, {\bf y})]$ and $D( {\bf x}, {\bf y})$ converges to 0, when more samples are used to approximate $\hat{D}_{r,w,k}({\bf x}, {\bf y})$.
\begin{theorem}\label{thm:similarity_metric_bound}
    Let ${\bf x}$ and ${\bf y}$ be two images of $n$ pixels each, where each pixel value lies in the interval $[0, q]$ for some $q > 0$ (e.g., 255). 
    Then, the expectation of the approximate similarity function can be bounded in terms of the intended similarity function as follows:
    Let $N = \lfloor \frac{n-w}{k}\rfloor$ be the number of windows in $S_{\bf x}$ and $S_{\bf y}$. 
    \[
    \begin{split}
        D({\bf x}, {\bf y}) &\leq \IE[\hat{D}_{r,w,k}({\bf x}, {\bf y})] \\ 
        &\leq \left(1- \left( \frac{N}{N-r} \right)^{-\alpha }\right)D({\bf x}, {\bf y})   + \left(\frac{N}{N-r} \right) ^{-\alpha} \IE[d(W_1, W_2)],
    \end{split}
    \]
    where $\alpha = |\{ x' \in S_{\bf x}, y' \in S_{\bf y} : d(x', y') = D({\bf x}, {\bf y}) \}|$ denotes the number of pairs of subimages in $S_{\bf x}$ and $S_{\bf y}$ that achieve the minimum distance across all possible pairs of subimages, and $W_1, W_2 \stackrel{iid}{\sim} Unif([0, q])^w$
\end{theorem}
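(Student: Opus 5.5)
The plan is to condition on whether the random sample of windows contains a minimizing pair, and to handle the two cases separately.

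\emph{Lower bound.} Since $R_{\bf x}\subseteq S_{\bf x}$ and $R_{\bf y}\subseteq S_{\bf y}$, the minimum over the smaller index set can only be larger:
\[
\hat{D}_{r,w,k}({\bf x},{\bf y})\geq D({\bf x},{\bf y})
\]
holds pointwise. Taking expectations gives the left inequality.

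\emph{Upper bound.} Let $E$ be the event that at least one of the $\alpha$ minimizing pairs $(x',y')$ satisfies $x'\in R_{\bf x}$ and $y'\in R_{\bf y}$. By the law of total expectation,
\[
\IE[\hat{D}_{r,w,k}] = \IP(E)\,D({\bf x},{\bf y}) + \IP(E^c)\,\IE[\hat{D}_{r,w,k}\mid E^c].
\]
On $E$ the approximation is exact. On $E^c$ I would bound $\hat{D}_{r,w,k}$ above by the distance of a single sampled pair, say $d(x'_1,y'_1)$ with $x'_1\in R_{\bf x}$ and $y'_1\in R_{\bf y}$. Here I would invoke the modelling assumption implicit in the statement: windows that are not forced to coincide behave like independent uniform patches. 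This lets me replace $\IE[d(x'_1,y'_1)\mid E^c]$ by $\IE[d(W_1,W_2)]$ with $W_1,W_2\stackrel{iid}{\sim}Unif([0,q])^w$. This step is heuristic, and I would state the assumption explicitly, since conditioning on $E^c$ biases which windows are drawn.

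Combining the two cases gives
\[
\IE[\hat{D}_{r,w,k}]\leq (1-p)D + p\,\IE[d(W_1,W_2)],
\qquad p=\IP(E^c).
\]
Because $D\leq \IE[d(W_1,W_2)]$ (again under the same assumption), the right side is increasing in $p$. So it suffices to prove $p\leq \left(\frac{N}{N-r}\right)^{-\alpha}=\left(1-\frac{r}{N}\right)^{\alpha}$.

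\emph{Main obstacle: bounding the miss probability $p$.} For a single fixed minimizing pair, the probability that its ${\bf x}$-window is missed by the sample is
\[
\binom{N-1}{r}\Big/\binom{N}{r} = 1-\frac{r}{N}.
\]
Missing the pair requires at least that its ${\bf x}$-window is not drawn, or, if one prefers to organize the events through the ${\bf y}$ side, that its ${\bf y}$-window is not drawn. I would then argue that the miss events for the $\alpha$ distinct pairs are negatively associated, because sampling without replacement only makes avoiding several windows harder. Concretely, when the pairs involve distinct windows on one side,
\[
\prod_{i=0}^{\alpha-1}\frac{N-r-i}{N-i}\leq\left(1-\frac{r}{N}\right)^{\alpha}.
\]
This gives $p\leq\left(\frac{N}{N-r}\right)^{-\alpha}$.

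The delicate point is that different minimizing pairs may share a window. In that case the above count is only valid after choosing, for each pair, the side whose window is distinct, or after using a cruder union-type argument. I expect to need to assume the $\alpha$ pairs have distinct ${\bf x}$-windows, or distinct ${\bf y}$-windows, and I would flag this as the mild assumption referred to before the theorem. Substituting the bound on $p$ yields the stated upper bound. The bias term is at most $\left(1-\frac{r}{N}\right)^{\alpha}\IE[d(W_1,W_2)]$, which tends to $0$ as $r\to N$.
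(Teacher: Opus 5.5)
Your lower bound and your two-case weighted average are exactly the paper's argument: on the capture event the approximation is exact, and on the miss event $\IE[d(W_1,W_2)]$ is a heuristic stand-in. The step that fails is your bound on the miss probability. Under Definition~\ref{dfn:approx}, the pair $(x'_i,y'_i)$ is missed exactly on $\{x'_i\notin R_{\bf x}\}\cup\{y'_i\notin R_{\bf y}\}$. That disjunction is the right necessary condition, but you cannot keep only one disjunct. Dropping $\{y'_i\notin R_{\bf y}\}$ shrinks the event, so $\prod_{j=0}^{\alpha-1}\frac{N-r-j}{N-j}$, the probability that all the ${\bf x}$-windows are avoided, is a \emph{lower} bound on $\IP(E^c)$, not an upper bound. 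No negative-association argument can repair this, because the target is false in the literal model. For $\alpha=1$ with $R_{\bf x},R_{\bf y}$ drawn independently, $\IP(E^c)=1-(r/N)^2>1-r/N$ whenever $0<r<N$.

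The paper never derives $p$ from two-sided sampling. It simply sets $p=\prod_{i=0}^{r-1}\bigl(1-\frac{\alpha}{N-i}\bigr)$, the probability that none of $\alpha$ designated windows appears among $r$ draws from $N$. In other words, it models capturing a minimizing pair as a single-window event, as happens if the same positions are sampled in both images and the minimizing pairs are aligned. You need to adopt that modelling assumption explicitly rather than try to prove it.

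Once you do, your route is sharper than the paper's. The identity $\binom{N-\alpha}{r}/\binom{N}{r}=\prod_{j=0}^{\alpha-1}\frac{N-r-j}{N-j}$ together with $1-\frac{r}{N-j}\le 1-\frac{r}{N}$ gives $p\le\bigl(\frac{N}{N-r}\bigr)^{-\alpha}$ as a genuine inequality. The paper's chain only approximates, and its two approximations do not compose in the needed direction: $1-x\le e^{-x}$ gives $p\le e^{-\alpha s}$ with $s=\sum_{m=N-r+1}^{N}\frac1m$, but $s\le\log\frac{N}{N-r}$ gives $e^{-\alpha s}\ge\bigl(\frac{N}{N-r}\bigr)^{-\alpha}$.

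Your flagged condition $D({\bf x},{\bf y})\le\IE[d(W_1,W_2)]$ is also genuinely needed and is missing from the paper. Take $d=\ell_2$, ${\bf x}\equiv 0$ and ${\bf y}\equiv q$. Then $\hat{D}_{r,w,k}=D=q\sqrt{w}>q\sqrt{w/6}\ge\IE[d(W_1,W_2)]$, so the stated right-hand side lies strictly below $D$ whenever $r<N$.
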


\begin{proof}
    We first show that $D( {\bf x}, {\bf y}) \leq \IE[\hat{D}_{r,w,k}({\bf x}, {\bf y})] $. By the definition of $D( {\bf x}, {\bf y}) $, for all $x \in S_{\bf x}, y \in S_{\bf y}$ such that  
    $d(x', y') \geq D( {\bf x}, {\bf y}) $. Therefore,  $D( {\bf x}, {\bf y}) \leq \IE[\hat{D}_{r,w,k}({\bf x}, {\bf y})] $.

    Next, we prove the upper bound.
    The probability that none of the $\alpha$ windows are sampled is 
    \[
    \begin{split}
        p & := \prod_{i=0}^{r-1} 1 - \frac{\alpha}{N-i} \\
        & \approx \exp\left(- \sum_{i=0}^r \frac{\alpha}{N-i}\right) \\
        & \approx \exp\left(-\alpha \log \frac{N}{N-r} \right)\\
        & =\left(\frac{N}{N-r} \right) ^{-\alpha }
    \end{split}
    \]
    where we made use of the fact that $1-x \approx \exp(-x)$ for $0<x<1$, and $\sum_{i=0}^r \frac{1}{N}-i \approx \log N - \log (N-r) = \log \frac{N}{N-r}$. Under this scenario, the expected distance between two randomly sampled windows is $\IE[d(W_1, W_2)]$.
    On the other hand, the probability that at least one of the $\alpha$ most similar windows is selected is $1-\left(\frac{N}{N-r} \right) ^{-\alpha }$, and the output of $\hat{D}_{r,w,k}({\bf x}, {\bf y})$ is $D({\bf x}, {\bf y})$.
    Therefore, the upper bound follows from taking a weighted average of the two possible outcomes based on their respective probabilities. 
\end{proof}

The approximate similarity function $\hat{D}_{r,w,k}({\bf x}, {\bf y})$ is inherently probabilistic in nature. The expected bias between the intended similarity metric $D({\bf x}, {\bf y})$ and its approximate counterpart decreases as the number of sampled windows $r$ increases, and it also depends on the overall similarity between ${\bf x}$ and ${\bf y}$, quantified by the parameter $\alpha$. However, in practical scenarios, $r$ is typically much smaller than the total number of possible windows $N$ to ensure computational efficiency. An adversary could exploit this limitation by introducing small distortions to either ${\bf x}$ or ${\bf y}$, which would reduce the value of $\alpha$. This reduction in $\alpha$ would, in turn, increase the expected bias between the probabilistic approximate similarity function and the actual similarity function. In Definition \ref{dfn:adv_attack_similarity}, we formally define and quantify the strength of such ``adversarial attacks'' on the approximated similarity functions

\begin{definition}[\emph{$(\epsilon, \delta)$-Adversarial Example on Metric}] 
\label{dfn:adv_attack_similarity}
    Let $D$ be the actual similarity metric, and $\tilde{D}$ be its corresponding approximated similarity function. We say that $( {\bf x}, {\bf x}')$ is an $(\epsilon, \delta)$-adversarial example on ($D$, $\tilde{D}$) if $D( {\bf x}, {\bf x}') \le \epsilon$ but
    $\widetilde{D}({\bf x}, {\bf x}') \ge \delta$.
\end{definition}
Given a pair of queries, the notion of $(\epsilon, \delta)$-adversarial examples allows us to quantify the 
discrepancy between their distances under the intended metric $D$ and approximated similarity function $\widetilde{D}$. 
For an approximated distance function to be robust, it should be difficult to construct
an $(\epsilon, \delta)$-adversarial example on $(d, \tilde{d})$ with a large gap $\delta - \epsilon$.

\subsubsection*{Adversarial Attack on Similarity Function} \label{subsubsection:adv_attack_similarity}
We propose a lightweight ``adversarial attack'' on the similarity detection
metric used by Blacklight, which does not require significant probing on the SD. In Blacklight, 
incoming queries are quantized (with quantization constant $q$) and partitioned
into overlapping moving windows of size $w$, and a few windows are selected 
randomly. They are then compared with a database of previously sampled windows. 

This lightweight attack is performed on image queries,
by randomly choosing a pixel in every non-overlapping window of size $w$, and adding
$c \times q$ to that pixel, where $c$ is randomly chosen from $\{-1, 1\}$. 
In doing so, we ensure that with high probability, any window of length $w$ 
sampled will not overlap with windows sampled from previous queries. 
In the case of Blacklight, it is possible to estimate
$q$ and $w$ using trial and error. 
We note that using an offset of $q/2$ works in approximately $65\%$ of the time. 

Furthermore, we modified the existing Boundary attack \cite{brendel2017decision}
to apply this random mask when querying the model, which successfully evades Blacklight
using the recommended configurations and randomized salt.
The adversarial example generated is in Figure \ref{fig:bl_mask_attack}.
The presence of such ``adversarial attacks''  underscores
the importance of choosing a robust similarity metric. 

\begin{figure}
\centering
  \includegraphics[width=.3\linewidth]{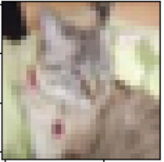}
       \includegraphics[width=0.3\linewidth]{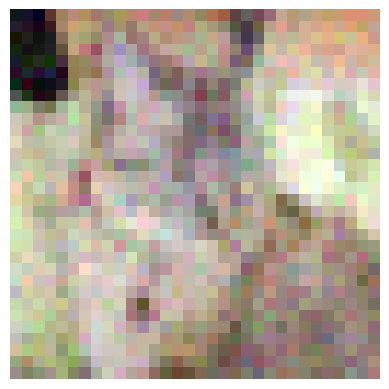}
      \includegraphics[width=.3\linewidth]{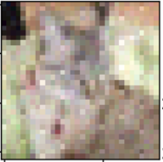}
    \caption{From left to right, the original image, adversarial example generated using OARS, 
    and adversarial example generated using our lightweight attack.
    }
    \label{fig:bl_mask_attack}
\end{figure}

\subsection{Temporal Correlation of Soft-Labels}


 For a sequence of benign queries, 
the soft-labels are not expected to exhibit any temporal trends, whereas the soft-label sequence
corresponding to an adversarial query sequence is likely to display a clear shift over time. 
See Figure \ref{fig:shifting labels} for an example of this
described temporal soft-label behavior.

\begin{figure*}
        \centering
        \includegraphics[width=0.45\linewidth]{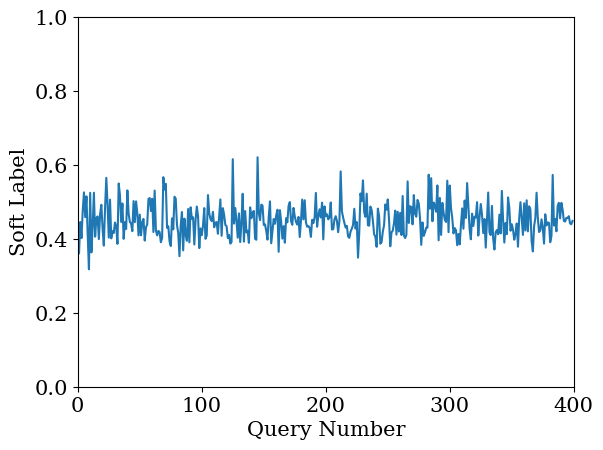}
            \includegraphics[width=0.45\linewidth]{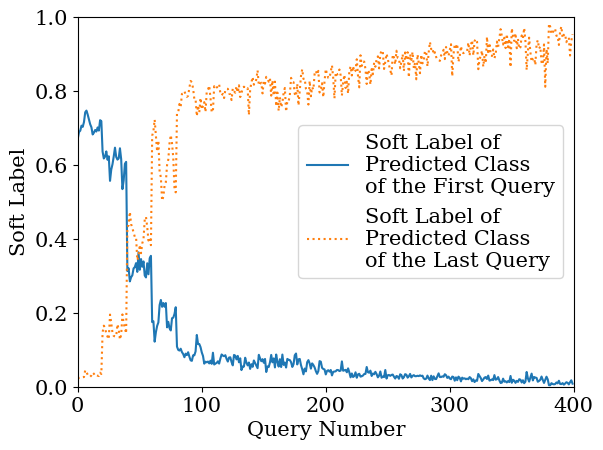}
        \caption{(From left to right) Soft-label behavior of a sequence of similar queries (stills from CCTV feed)
        vs queries generated from adversarial attack. For the camera image sequence, the soft-label
        corresponds to the predicted class of the first image in the sequence. A smaller numbered query arrives earlier.}
        \label{fig:shifting labels}

\end{figure*}
To determine if the soft-label sequence exhibits any temporal trends, 
we extract the soft-labels associated with the predicted class of the first query, for the entire query sequence. 
The Ljung–Box test is applied to this soft-label sequence
    to test for serial correlation
    \cite{box1970distribution}. If the $p$-value from this test falls below a predetermined 
    threshold, the sequence is deemed to exhibit serial correlation and hence
    is likely to be temporal \cite{matyas1996serial}, and is flagged as adversarial.
    This behavior corresponds with the objectives of an adversarial 
attack, which is to identify and iteratively refine perturbations which will result in misclassification.
Therefore, a sequence that is highly similar, and exhibits temporal behavior, is likely to constitute
an adversarial attack sequence.

\section{Proposed Detection Framework} \label{section:proposed_framework}

In this section, we use the observations described earlier to construct our proposed
detection framework for adversarial attack queries. It 
is a two phase process, which given a sequence consisting of clean and adversarial queries, 
(i) detects subsequences of queries that are similar, and (ii) passes this subsequence
for soft-label analysis to determine if it is adversarial.

\subsection{Phase 1: Detecting Subsequences of Similar Queries} \label{subsection:phrase1}

Given an input query  ${\bf{x}}=(x_1, \dots, x_n)$, where $x_i \in [0, N]$ for all $i=1, \dots, n$,
for an integer $N>0$, 
and a fixed randomly chosen salt $s$,
quantization constant $q$, window size $w$ and maximum overlap $k$, 
\begin{enumerate}
    \item Perform \textit{salted randomized quantization} on ${\bf{x}}$, as follows:  
Set each entry $x_i$ as $\lfloor (x_i + s) \mod N /q \rfloor + B_i$, where $B_i \sim \text{Bernoulli}((1/w) (x_i\mod q)/q)$,
where $s$ is a salt chosen randomly over $[0,1]$ uniformly. 
    \item Randomly select $r$ windows of size $w$ from the quantized query, where the pairwise overlap
    between windows is at most $k$.
    \item Flag ${\bf{x}}$ as similar (w.r.t. some previous query) if the number of windows overlap with previously encountered ones exceeds a 
    predetermined threshold. 
    \item A list of candidate adversarial subsequences is maintained. To determine the subsequence membership
    for a flagged query, we perform
    the above comparison method above on each individual subsequence.
    \item Subsequences that are updated and exceed a predetermined length threshold
    are sent to Phase 2 for soft label analysis.
\end{enumerate}
The constants $q, w$ and $k$ are hyperparameters, to be tuned in Section \ref{subsection:det_threshold}.

\subsubsection*{Salted randomized quantization} \label{subsubsection:srq}
{Salted randomized quantization} introduces randomness into the quantization process, by introducing
a random salt and randomly rounding values up. This improves the robustness of the proposed similarity
metric against reverse engineering 
and adaptive attacks. Our emperical evaluations (Section \ref{subsubsection:adv_attack_similarity}) show that
our framework using {salted randomized quantization} achieves comparable detection accuracy compared to using
normal quantization, and is more robust against adaptive attacks (e.g. OARS). 
\paragraph{Robustness Against Adaptive Probing}
An adversary may adaptively probe the SD to determine the minimum perturbation required to bypass the similarity function. Specifically, this involves identifying the $(\epsilon, \delta)$ "adversarial attack" on the similarity function (as defined in Definition \ref{dfn:adv_attack_similarity}) that is sufficient to evade detection. 
Salted randomized quantization significantly increases the difficulty for an adversary to find a perturbation magnitude that avoids detection without degrading the quality of the adversarial attack. In Section \ref{subsection:oars_expt}, we empirically demonstrate that an adversary must introduce a substantial amount of perturbation to evade detection, ultimately failing to produce a successful adversarial example.


\subsubsection*{Improving computational efficiency}
The following measures may be taken to improve computational efficiency:
\begin{enumerate}
    \item The database of sampled windows is stored in a Bloom filter,
    which facilitates efficient insertion and retrieval.
    \item To reduce computation time and overhead, we limit the following:
    \begin{enumerate}
        \item The number of subsequences maintained. 
    The subsequences are stored in a queue data structure, ordered by the time since the last update. 
    If the queue length exceeds a certain threshold, subsequences that were not recently updated
    will be evicted from the queue first. 
        \item The number of queries in each subsequence. If the length of a subsequence exceeds
        a certain threshold, the oldest query will be removed. 
        \item Matching an individual query to existing subsequences can be 
        done concurrently through multiprocessing. 
    \end{enumerate} 
\end{enumerate}

\subsection{Phase 2: Analysing Soft-labels of Query Subsequences} \label{subsection:phrase2}

Subsequences of similar queries identified in Phase 1 will be sent to Phase 2 for soft-label 
analysis. 
Given a  sequence of queries ${\bf{x}}_1, \dots, 
{\bf{x}}_n$, a classifier which outputs the predicted
hard label as $C(\cdot)$ and soft-label of class $c$ as $p_c(\cdot)$, we propose the following to
analyze soft-label for temporal trends.
\begin{enumerate}
    \item Let $c_1 = C({\bf{x}}_1)$, and obtain the sequence of soft-labels $p_{c_1} := p_{c_1}({\bf{x}}_1), \dots, p_{c_1}({\bf{x}}_n)$
    from the model output.
    \item 
     Apply the Ljung–Box test on the sequence $p_{c_1}$
    to test for serial correlation (and hence, temporal behavior) in the soft-label sequence.
    If the $p$-value from this test is below a predetermined 
    threshold, flag the sequence as adversarial. This threshold is a hyperparameter, 
    which is tuned in Section \ref{subsection:threshold_pv}.
\end{enumerate}
The analysis of soft-labels will prevent wrongly flagging similar but benign query sequences
(e.g. stills from camera recordings or images with a similar backgrounds) from being wrongly
flagged as malicious. 

\subsubsection*{Choice of Test for Temporal Correlation}
The Ljung-Box test is a statistical test that admits the null hypothesis that a given sequence of soft-labels does not exhibit spatial temporal behavior. 
Besides the Ljung-Box test, we also evaluated two other commonly used tests in practice: (1) the Lagrange Multiplier test and (2) the Breusch–Godfrey test. Our choice of the Ljung-Box test is motivated by the following reasons:

\begin{enumerate}
\item \textbf{Flexibility in lag selection.} The Ljung-Box test evaluates temporal correlation across multiple lags simultaneously, providing a comprehensive assessment of autocorrelation.

\item \textbf{Empirical performance.} During our empirical evaluations, the Ljung-Box test consistently achieved the best performance across all datasets and models. This finding aligns with prior studies (e.g., \cite{HASSANI201981, dare2022comparison}), which demonstrate that the Ljung-Box test is a robust statistical method for detecting temporal correlation in real-world datasets.
\end{enumerate}
We emphasize that our two-phase framework is designed to be flexible and can seamlessly accommodate alternative tests for temporal correlation if desired.

\subsection{Combined Two-phase Method}
As highlighted earlier, the goal of Phase 1 is a computationally efficient way to
identify subsequences of similar queries, while the goal of Phase 2 is to 
determine if a given subsequence consitutes an adversarial attack based on their 
soft-label behavior. 
To improve detection coverage, we can reduce the detection thresholds in 
    Phase 1 to reduce the false negative rate, at the expense of a higher false positive
    rate. These false positives can then be filtered
    out in Phase 2.

\section{Experiments} \label{section:experiment}
Our proposed methods are empirically evaluated using five exisiting black box adversarial attacks, which encompass both query and score based attacks. The aim of our experiments are,
\begin{enumerate}
    \item Determine the parameters and detection thresholds, and evaluate whether these
    parameters and thresholds are generalizable to previously unseen adversarial attack queries;
    \item Evaluate the performance of the proposed framework against several SOTA adversarial attack algorithms;  and  
    \item Evaluate the performance of the proposed framework against adaptive attacks designed to 
    evade SD detection. 
\end{enumerate}
We measure accuracy using (i) true positive rates (i.e. proportion 
malicious queries that are detected), and (ii) false positive rates (i.e. benign queries that
are wrongly flagged as adversarial).

\subsection{Experimental Setup} \label{subsection:expt_steup}

We evaluate our proposed method using the models and datasets given in Table
\ref{table:model_data}. The ResNet50 model for ImageNet was downloaded
from PyTorch, and the CNN model for CIFAR10 was downloaded from \cite{art2018}.


\begin{table}
\caption{Detailed information of models and datasets.}
\centering
\begin{tabular}{c|cccc}
\toprule
Dataset  & Classes & Model     & Input Shape               & \begin{tabular}[c]{@{}l@{}}Top-1 \\ Accuracy\end{tabular} \\ \midrule
ImageNet & 100     & ResNet50 & $224 \times 224 \times 3$ & 76\%                                                   \\
CIFAR10  & 10      & CNN       & $32 \times 32 \times 3$   & 85\%                                                      \\ \bottomrule
\end{tabular}

\label{table:model_data}
\end{table}

To assess if our proposed method is robust to non-malicious but similar
query sequences, we included the following in the datasets,
\begin{enumerate}
    \item Sequence of images constructed by applying 
    i.i.d $\mathcal{N}(0, 0.1^2)$ noise on a base image.
    The base image is not part of any adversarial query sequence.  \label{img_own_noise}.
    \item Stills taken from the camera feed in
    \cite{kaggle_highway}, examples in Figure \ref{fig:cctv-stills}. \label{img_camera}
\end{enumerate}
The rationale for using (\ref{img_own_noise}) is that an adversary may
attempt to overwhelm query similarity detectors by submitting queries that
are similar but not part of an adversarial attack sequence. 
Including (\ref{img_camera}) allows us to confirm this on real-world 
images, where some non-malicious queries may overlap significantly.

\begin{figure}
    \centering
    \includegraphics[width=0.3\linewidth]{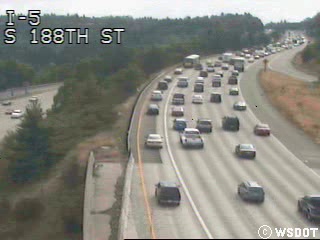}
    \includegraphics[width=0.3\linewidth]{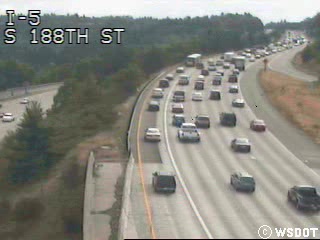}
    \includegraphics[width=0.3\linewidth]{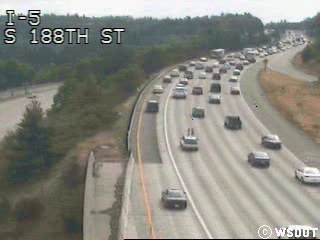}
    \caption{Examples of stills taken from a the camera feeds. There is significant overlap in background features among them.}
    \label{fig:cctv-stills}
\end{figure}

\subsection{Determining Detection Thresholds} \label{subsection:det_threshold}

In this section, we will explain the process and rationale behind the decision thresholds 
chosen for both Phase 1 (query similarity) and
and Phase 2 (soft-label analysis).
These thresholds will be used in the subsequent evaluations. 

\subsubsection*{Parameters in Phase 1} \label{subsection:threshold_query_similarity}

To determine the detection thresholds for Phase 1, we 
performed our proposed query similarity detection method on a sequence of 10,000 queries consisting 
of
\begin{enumerate}
    \item 2,000 images from the camera stills described in Section \ref{subsection:expt_steup};
    \item 5,000 distinct images randomly chosen from the datasets, which do not overlap with any image
    in (3); and
    \item 3,000 images consisting of 10 sequences of 300 adversarial queries generated from
    the Square Attack \cite{andriushchenko2020square}. 
\end{enumerate}
The queries in the sequence are randomly shuffled, but queries in (3) arranged in their original
query ordering. The choice of the Square attack is arbitrary, as we want to assess how well the obtained
threshold generalizes to other adversarial attacks. 

The metric used is the number of windows flagged out as similar in each of the 3 categories. For (1) and (2),
a low false positive rate (FPR) corresponds to fewer windows flagged out, and for (3), a high true positive rate
(TPR) corresponds to a higher number of windows flagged out. 
Since the role of Phase 1 is to flag out suspicious query sequences to undergo soft-label analysis in Phase 2, 
we allow for a higher TPR at the expense of a higher FPR.
The parameters were chosen using a grid search strategy. 
The resulting parameters for each dataset are, 
\begin{itemize}
    \item CIFAR10: $q=80, w=20, r=45, k=15$;
    \item ImageNet: $q=80, w=50, r=75; k=20$.
\end{itemize}
Using the above parameters, we will estimate a decision threshold
which will be used to flag a query sequence as similar, if the number of
overlapping windows exceeds this threshold. The thresholds are,
\begin{itemize}
    \item CIFAR10: 33;
    \item ImageNet: 7.
\end{itemize} 
The distribution of
the number of overlaps by query type, and the corresponding decision
threshold is summarized in Figure \ref{fig:phrase_1_distr_threshold}.

\begin{figure}
    \centering
    \includegraphics[width=0.49\linewidth]{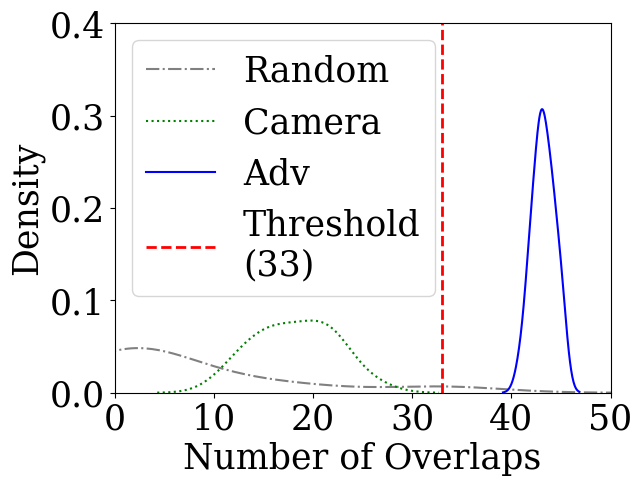}
    \includegraphics[width=0.49\linewidth]{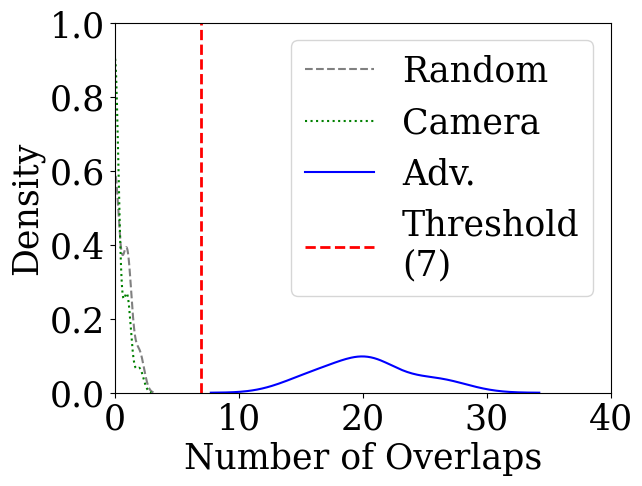}
    \caption{(From left to right) Distribution of the number of overlapping windows (by image category) for the CIFAR10 and
    ImageNet datasets, respectively.}
    \label{fig:phrase_1_distr_threshold}
\end{figure}

\begin{table} [!htbp]
\caption{TPR, FPR and Overall Accuracy for different $p$-value thresholds. Benign queries
include images randomly sampled from the ImageNet and CIFAR10 datasets, 
and adversarial sequences used are of length $15$}
\centering
\begin{tabular}{c|ccc}
\toprule
$p$    & TPR  & FPR  & Accuracy \\ \midrule
0.005 & 0.17 & 0    & 0.58     \\
0.01  & 0.52 & 0    & 0.71     \\
0.015 & 0.73 & 0    & 0.86     \\
0.02  & 0.98 & 0.01 & 0.97     \\
0.025 & 1    & 0.02 & 0.99     \\
0.03  & 1    & 0.02 & 0.99     \\
0.035 & 1    & 0.02 & 0.98     \\
0.04  & 1    & 0.03 & 0.97     \\
0.045 & 1    & 0.05 & 0.95     \\
0.05  & 1    & 0.05 & 0.95     \\ 
\bottomrule
\end{tabular}

\label{table:p_val_threshold}
\end{table}

\subsubsection*{Parameters in Phase 2}  \label{subsection:threshold_pv}
Phase 2 examines the soft-label behavior of a candidate query sequence 
using the statistical test described in Section \ref{subsection:phrase2}. The sequence 
is deemed to exhibit temporal behavior if the $p$-value of the statistical test 
is less than a certain threshold. 
To determine this threshold, 
the following query sequences were constructed.
\begin{enumerate}
    \item Images randomly sampled from the dataset;
    \item Queries generated from an adversarial attack (Square Attack).
\end{enumerate}
The queries are of various lengths to determine the sequence length
needed for successful detection. 
The following metrics are employed, 
\begin{itemize}
    \item False Positive Rate (FPR): Proportion of sequences from
    (1) wrongly flagged as adversarial
    \item True Positive Rate (TPR): Proportion of sequences from
    (2) correctly flagged as adversarial
    \item Accuracy: Proportion of sequences flagged correctly
\end{itemize}
The results are summarized 
in Figure \ref{fig:expt_p_value} where the $p$-values from both CIFAR10
and ImageNet datasets are combined as they are distributed similarly, and are of the same scale 
(which allows for direct comparison). The full results
are available in Table \ref{table:p_val_threshold}.
\begin{figure}
    \centering
    \includegraphics[width=.49\linewidth]{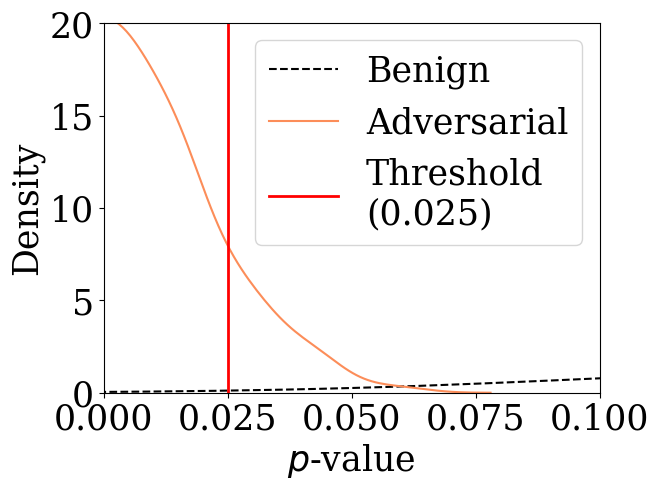}
    \includegraphics[width=.49\linewidth]{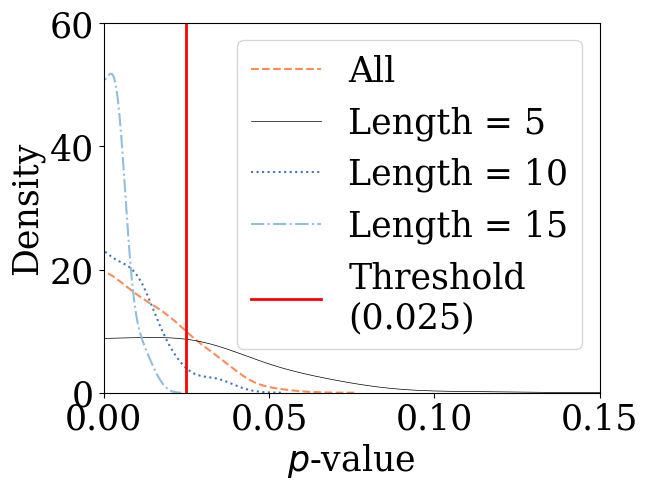}
    \caption{(Left) Distribution of $p$-values obtained for adversarial vs non-adversarial query sequences. 
    (Right) Distrbution of $p$-values obtained for adversarial sequences of different lengths. Sequences with
    $p$-values that are 
    less than the threshold will be flagged as adversarial.}
    \label{fig:expt_p_value}
\end{figure}
Based on Figure  \ref{fig:expt_p_value}, 
\begin{itemize}
    \item As expected, $p$-values of the randomly selected/camera image sequences 
    are on average, larger $p$-values compared to adversarial query sequences. 
    \item The selected threshold of $0.025$ accurately flags out
    all adversarial sequences of length 15. In subsequent evaluations, 
    candidate query sequences flagged in Phase 1 must contain at least 15 queries before proceeding to Phase 2.
\end{itemize}

\subsection{Robustness Against Existing SOTA Query-based Black-box Attacks} 
\label{section:expt_robustness_against_sota}

We evaluate our proposed framework against the following hard and soft-label based attacks,
\begin{enumerate}
    \item (Hard-label) Boundary Attack (BA) \cite{brendel2017decision}
    \item (Hard-label) Hop Skip Jump Attack (HSJA) \cite{chen2019hop}
    \item (Hard-label) HSJA with frequency mixup (f-mixup) \cite{li2022decision},
    which was proposed as an effective method to evade Blacklight detection
    \item (Soft-label) Square Attack (SA) \cite{andriushchenko2020square}
    \item (Soft-label) Simple Black-box Attack (SimBA) \cite{guo2019simple}
\end{enumerate} 
For each attack, 20 image query sequences were generated by first selecting 20 random images from the dataset (ImageNet/CIFAR-10) and applying adversarial attacks. Then, 5,000 images from the respective datasets (not used for the adversarial attack), and 1,000 camera stills from each of two video datasets were randomly selected and shuffled. The first 50 queries from the adversarial queri sequence is inserted into this sequence while maintaining their original order.
The parameters used
for these attacks are those suggested in their respective papers or 
attack implementations. The metrics used for evaluation as follows,
\begin{itemize}
    \item False positive rate (FPR): Proportion of flagged queries that are not part of an adversarial attack. 
    \item True positive rate (TPR): Proportion of flagged queries that are part of an adversarial attack. 
\end{itemize}
We did not report the overall accuracy, as this metric is affected by the proportion of
benign and adversarial queries in the sequence. 
To enable a fair comparison, we
tuned Blacklight's parameters using a grid search strategy, to maximize its accuracy. 
The full results of our evaluation is given in Table 
\ref{table:eval_framework_bl}, together with the performance of Blacklight (using both
original parameters and tuned parameters)
on the same query sequences as a baseline. 

\begin{table*}
\caption{Evaluation results of the our proposed two-phase framework (using salted randomized
quantization), with
respect to Blacklight. Blacklight refers to the default recommended settings,  Blacklight$^{+}$ refers to the method with tuned parameters obtained through a grid search strategy. }
\centering
\begin{tabular}{cc|ccc|ccc|ccc}
\toprule
\multicolumn{2}{c|}{Attack}      & \multicolumn{3}{c|}{BA\cite{brendel2017decision}}                     & \multicolumn{3}{c|}{HSJA\cite{chen2019hop}}                   & \multicolumn{3}{c}{F-mixup(HSJA)\cite{li2022decision}}          \\
\multicolumn{2}{c|}{Defense}     & Blacklight & Blacklight$^{+}$   & \textbf{Ours}      & Blacklight & Blacklight$^{+}$   & \textbf{Ours}      & Blacklight & Blacklight$^{+}$   & \textbf{Ours}      \\\midrule
\multirow{2}{*}{CIFAR10}  & TPR & 1.00       & 0.73          & \textbf{1.00} & 1.00       & 0.68          & \textbf{1.00} & 0.87       & 0.66          & \textbf{1.00} \\
                          & FPR & 0.28       & 0.00          & \textbf{0.00} & 0.21       & 0.00          & \textbf{0.00} & 0.29       & 0.00          & \textbf{0.00} \\\midrule
\multirow{2}{*}{ImageNet} & TPR & 0.83       & 0.42          & \textbf{0.98} & 0.98       & 0.50          & \textbf{1.00} & 0.91       & 0.57          & \textbf{1.00} \\
                          & FPR & 0.49       & {0.00} & \textbf{0.09}         & 0.37       & 0.00 &\textbf{0.04}          & 0.41       & 0.00 & \textbf{0.07}          \\\midrule
                          \midrule
\multicolumn{2}{c|}{Attack}      & \multicolumn{3}{c|}{SA\cite{andriushchenko2020square}}                     & \multicolumn{3}{c|}{SimBA\cite{guo2019simple}}                  & \multicolumn{3}{c}{Average}                \\
\multicolumn{2}{c|}{Defense}     & Blacklight & Blacklight$^{+}$   & \textbf{Ours}      & Blacklight & Blacklight$^{+}$   & \textbf{Ours}      & Blacklight & Blacklight$^{+}$   & \textbf{Ours}      \\\midrule
\multirow{2}{*}{CIFAR10}  & TPR & 1.00       & 0.58          & \textbf{1.00} & 1.00       & 0.68          & \textbf{1.00} & 0.97       & 0.67          & \textbf{1.00} \\
                          & FPR & 0.33       & 0.00          & \textbf{0.00} & 0.47       & 0.00          & \textbf{0.00} & 0.31       & 0.00          & \textbf{0.00} \\\midrule
\multirow{2}{*}{ImageNet} & TPR & 0.94       & 0.61          & \textbf{1.00} & 0.87       & 0.59          & \textbf{1.00} & 0.91       & 0.54          & \textbf{1.00} \\
                          & FPR & 0.37       & 0.00 & \textbf{0.03}          & 0.46       & 0.00 & \textbf{0.07}          & 0.42       & 0.00 & \textbf{0.06}      \\\bottomrule  
\end{tabular}

\label{table:eval_framework_bl}
\end{table*}

\subsubsection*{Analysis}
The experimentation result is summarized in Table \ref{table:eval_framework_bl}.
\begin{enumerate}
    \item \textbf{Performance on Adversarial Queries (TPR)}. Our proposed two-phase detection framework achieves
    comparable performance to Blacklight on adversarial queries. The higher quantization constant used
    improves adversarial query detection coverage.
    \item \textbf{Performance on similar queries (FPR)}. Most of the FPR
    in Blacklight (default setting) are due to similar queries arising from the camera stills dataset. Blacklight was not able to distinguish non-adversarial and adversarial queries that are similar. 
    Our framework suffers from a slightly higher FPR in ImageNet (compared to CIFAR10), due to some images
    having overlapping backgrounds. 
    \item \textbf{Average queries to detection}. On average, adversarial query sequences are identified
    within 20 and 45 queries for the CIFAR10 and ImageNet datasets, respectively. 
\end{enumerate}

\subsection{Detection Ability Against Similar but Benign Queries}

\begin{table*}
 \caption{Evaluation of our proposed framework against Blacklight on adversarial and non-adversarial queries. Blacklight refers to the default recommended settings,  Blacklight$^{+}$ refers to the method with tuned parameters obtained through a grid search strategy.} \label{table:adv_fpr}
\centering\begin{tabular}{c|cccc|ccc|cccc|ccl}
\toprule
\multicolumn{8}{c|}{CIFAR10}                                                  & \multicolumn{7}{c}{ImageNet}             \\ \midrule
\multirow{2}{*}{Defense}  & 
  \multicolumn{4}{c|}{Detected Queries from Each Category} &
  \multirow{2}{*}{TPR} &
  \multirow{2}{*}{FPR} &
  \multirow{2}{*}{ACC}  &
  \multicolumn{4}{c|}{{Detected Queries from Each Category}} &
  \multirow{2}{*}{TPR} &
  \multirow{2}{*}{FPR} & 
  \multirow{2}{*}{ACC} \\ \cmidrule{2-5} \cmidrule{9-12}
                 & I   & II & III & IV  &      &      & \multicolumn{1}{l|}{} & I   & II & III & IV &      &      &      \\ \hline
Blacklight       & 100 & 0  & 97  & 100 & 1.00 & 0.66 & 0.26                 & 100 & 18 & 76  & 96 & 1.00 & 0.63 & 0.28 \\
Blacklight$^{+}$ & 53  & 0  & 0   & 0   & 0.53 & 0.00 & 0.88                 & 41  & 0  & 0   & 0  & 0.41 & 0.00 & 0.85 \\
\textbf{Ours} &
  \textbf{100} &
  \textbf{0} &
  \textbf{0} &
  \textbf{0} &
  \textbf{1.00} &
  \textbf{0.00} &
  \textbf{1.00} &
  \textbf{100} &
  \textbf{0} &
  \textbf{0} &
  \textbf{0} &
  \textbf{1.00} &
  \textbf{0.00} &
  \textbf{1.00} \\ \bottomrule
\end{tabular}

\end{table*}

As observed in Section \ref{section:expt_robustness_against_sota}, assessing query similarity alone
will result in the SD wrongly flagging similar but benign queries as adversarial. 
To assess the performance of our proposed framework,
we tested it using the following query sequences
\begin{enumerate}
    \item Queries generated from each of the 5 adversarial attacks listed in Section \ref{section:expt_robustness_against_sota}.
    \item Randomly chosen images from the CIFAR and ImageNet dataset.
    \item Query sequence constructed by corrupting a base image with Gaussian noise.
    \item Successive images taken from camera stills. 
\end{enumerate}
We created 100 query sequences of length 50 for each of the four sequence types. For the adversarial attack sequences, we generated 20 sequences for each of the five adversarial attack algorithms. The adversarial
attack sequence is constructed by randomly selecting an image from the dataset, and taking the
first 50 queries generated by an adversarial attack on that image. 

The following metric is used to evaluate the performance
of our proposed framework.
\begin{itemize}
    \item Number of sequences flagged as containing adversarial queries for each of the four types of query sequences
    \item True positive rate (TPR): The proportion of sequences from (1) which are flagged as adversarial.
    \item False positive rate (FPR): The proportion of sequences from (2) - (4) which are flagged as adversarial.
    \item Accuracy (ACC): The overall proportion of sequences which are classified correctly. 
\end{itemize}
We compared our framework with Blacklight, where the default version uses its recommended settings
for the respective datasets, while the tuned version uses a grid search approach to optimize
its parameters to improve its overall accuracy. 
The results of our evaluation is presented in Table \ref{table:adv_fpr}. 

We observe the following:
\begin{itemize}
    \item \textbf{TPR}. Our framework achieves a high TPR comparable to Blacklight, which
    demonstrates its detection ability on adversarial queries. 
    \item \textbf{FPR}. Our detection framework achieves a 0 FPR, as sequences of similar
    but benign queries are filtered out in Phase 2. As Blacklight operates using image
    similarity, all sequences that consists of highly similar queries are flagged. 
    \item The tuned version of Blacklight reduces its false positive rate by lowering
    its quantization constant, which results in a lower detection coverage of adversarial query sequences. 
\end{itemize}

\subsection{Robustness Of Phase 1 against OARS} \label{subsection:oars_expt}
Recap that OARS consists of two parts, namely
(1) adaptively probing a SD to determine the magnitude of perturbation needed to evade detection, 
and (2) modifying existing attacks to utilize this magnitude of perturbation. This evaluation serves as an ablation study for the robustness of our proposed query similarity detection in phase 1. 



\subsubsection*{Noise Distribution Needed To Evade Phase 1}
Note that the perturbation needed by OARS to evade phase 1
prevents an adversary from querying a model
meaningfully.
\begin{enumerate}
 \item The quantization constant we applied in Phase 1 is large (80
in our framework v.s. 50 in Blacklight), which implies that any
perturbation to evade our similarity detection phase must be large. 
    \item Randomness introduced in the similarity detection process through
    salted randomized quantization makes it harder for an adversary to 
    bypass similarity detection. 
\item The larger step size (15 (CIFAR10) and 20 (ImageNet), vs 1 in Blacklight (all queries))
means that our framework will cover more portions of the images.
Therefore, any adaptive sampling will require a large amount of noise
across the entire query space.
\end{enumerate}

\subsubsection*{Empirical Evaluation of OARS} \label{subsubsection:oars evaluation}
We proceed to empirically evaluate OARS against our proposed detection framework. 
 Since all the attacks
work in a similar way (i.e. constructing queries using the proposed noise
distribution that is identified via adaptive querying), 
we will evaluate our framework against the OARS-enhanced
boundary attack (hard-label), and HSJA attack (soft-label), with Blacklight (default and tuned versions) as a baseline. 
In our evaluation, we picked 20 images randomly 
from both the CIFAR10 and ImageNet datasets, which were not used in our previous experiments. 
We evaluate the performance of the OARS enhanced attack by 
\begin{enumerate}
    \item The magnitude of adaptive perturbation needed to bypass detection
    \item Attack success rate - An adversarial attack is deemed to be successful
            if it is able to produce an adversarial example within 50,000 queries. 
    \item Number of queries needed to generate an adversarial example. 
    \item Number of queries until detection.
\end{enumerate}

\begin{table*}
\caption{Proposal distribution, attack success rates, and the
number of queries needed for successful adversarial attack on our proposed framework and Blacklight. The numbers in the parenthesis 
refer to the standard deviations} \label{table:evaluate_oars}
\scalebox{0.95}{
\begin{tabular}{c|c|ccc|ccc}
\toprule
\multirow{3}{*}{Assessment}                                                                                 & \multirow{3}{*}{Attack}                                   & \multicolumn{3}{c|}{CIFAR10}                                                                                                                                                                                         & \multicolumn{3}{c}{ImageNet}                                                                                                                                                                                                                     \\ \cline{3-8} 
&                                                           & \multicolumn{2}{c|}{Proposed Framework (Phase 1)}                                                                                                                                     & \multirow{2}{*}{Blacklight} & \multicolumn{2}{c|}{Proposed Framework (Phase 1)}                                                                                                                                     & \multirow{2}{*}{Blacklight}                              \\ \cline{3-4} \cline{6-7}
&                                                           & \multicolumn{1}{c|}{\begin{tabular}[c]{@{}c@{}}Normal \\ Quantization\end{tabular}} & \multicolumn{1}{c|}{\begin{tabular}[c]{@{}c@{}}Salted \\ Randomized\\ Quantization\end{tabular}} &                             & \multicolumn{1}{c|}{\begin{tabular}[c]{@{}c@{}}Normal \\ Quantization\end{tabular}} & \multicolumn{1}{c|}{\begin{tabular}[c]{@{}c@{}}Salted \\ Randomized\\ Quantization\end{tabular}} &                                                          \\ \midrule
\multirow{2}{*}{\begin{tabular}[c]{@{}l@{}}Proposal Distribution \\ (OARS)\end{tabular}}                    & HSJA                                                      & \multicolumn{1}{c|}{$U(-14.9, 14.9)$}                                               & \multicolumn{1}{c|}{$U(-13.7, 13.7)$}                                                            & $U(-6.5, 6.5)$              & \multicolumn{1}{c|}{$U(-20.0, 20.0)$}                                               & \multicolumn{1}{c|}{$U(-17.9, 17.9)$}                                                            & $U(-8.1, 8.1) $                                          \\ \cline{2-8} 
 & \begin{tabular}[c]{@{}l@{}}Boundary\\ Attack\end{tabular} & \multicolumn{1}{c|}{$\mathcal{N}(0, 8.5^2)$}                                                  & \multicolumn{1}{c|}{$\mathcal{N}(0, 8.1^2)$}                                                               & $\mathcal{N}(0, 4.0^2)$               & \multicolumn{1}{c|}{$\mathcal{N}(0, 9.0^2)$}                                                  & \multicolumn{1}{c|}{$\mathcal{N}(0, 8.8^2)$}                                                               & $\mathcal{N}(0, 4.7^2)$                                            \\ \midrule
\multirow{2}{*}{Attack Success Rate}                              & HSJA                                                      & \multicolumn{1}{c|}{0\%}                                                            & \multicolumn{1}{c|}{0\%}                                                                         & 0\%                         & \multicolumn{1}{c|}{0\%}                                                            & \multicolumn{1}{c|}{0\%}                                                                         & 35\%                                                     \\ \cline{2-8} 
& \begin{tabular}[c]{@{}l@{}}Boundary\\ Attack\end{tabular} & \multicolumn{1}{c|}{0\%}                                                            & \multicolumn{1}{c|}{0\%}                                                                         & 0\%                         & \multicolumn{1}{c|}{0\%}                                                            & \multicolumn{1}{c|}{0\%}                                                                         & 55\%                                                     \\ \midrule
\multirow{2}{*}{\begin{tabular}[c]{@{}l@{}}Queries Needed\\ for Attack\\ (Standard Deviation)\end{tabular}} & HSJA                                                      & \multicolumn{1}{c|}{N/A}                                                            & \multicolumn{1}{c|}{N/A}                                                                         & N/A                         & \multicolumn{1}{c|}{N/A}                                                            & \multicolumn{1}{c|}{N/A}                                                                         & \begin{tabular}[c]{@{}c@{}}23,075\\ (1,973)\end{tabular} \\ \cline{2-8} 
& \begin{tabular}[c]{@{}l@{}}Boundary\\ Attack\end{tabular} & \multicolumn{1}{c|}{N/A}                                                            & \multicolumn{1}{c|}{N/A}                                                                         & N/A                         & \multicolumn{1}{c|}{N/A}                                                            & \multicolumn{1}{c|}{N/A}                                                                         & \begin{tabular}[c]{@{}c@{}}41,073\\ (2,128)\end{tabular} \\ \bottomrule
\end{tabular}}
\end{table*}

\mbox{ }

The results of our evaluation are presented in Table \ref{table:evaluate_oars}. We observe
the following, 
\begin{enumerate}
    \item The magnitude of noise needed to bypass Phase 1 (for both quantization variants) is 
    higher than Blacklight, due to the larger quantization constant employed. 
    \item Salted randomized quantization reduces the noise magnitude returned by OARS.
    This is due to the randomness introduced
    during the quantization process, which may initially flag two similar queries as different. 
    However, this noise magnitude is lower than the magnitude needed to bypass detection.
    \item None of the attacks managed to achieve a successful adversarial example
    within 50,000 queries on Phase 1. While this does not exclude the possibility of
    adversarial attacks, the number of queries needed will significantly increase. 
    For comparison, without any defenses, it is possible to generate an adversarial
    example using HSJA in approx. 20,000--24,000 queries, and boundary attack in 
    approx. 30,000--35,000 queries. 
\end{enumerate}

\subsection{Computation cost} \label{subsection:computation_overhead}


\noindent Our system specifications are as follows:
    Processor: Intel(R) Xeon(R) CPU E5-2620 v4 @ 2.10GHz;
RAM: 256GB;
OS:  Ubuntu 20.04.4 LTS. 
Note that GPU is not required.

\subsubsection*{Storage Overhead}
Most of the storage overhead arises from two sources
\begin{enumerate}
    \item The Bloom filters mantained to faciliate query similarity detection.
    \item The soft-labels cached for sequences of similar queries. 
\end{enumerate}
A maximum of 51 Bloom filters with maximum size of $100,000$ and error rate of $0.05$ were mantained,
and maximum total memory overhead incurred at any given time was approximately 7.50 megabytes. On the other hand,
the soft-label cache was more expensive to maintain, as the soft-labels are represented as an array of
floating-point numbers. For the ImageNet dataset, this amounted to a maximum of 17.9 megabytes at
any given time, and at most 0.8 megabytes for CIFAR10. 
No additional GPU resource is needed to compute the soft-labels, as they can be cached during inference time. 

\subsubsection*{Computation Time} 
On average, the total time taken for each of the 5 query sequences (approx. 30,000 queries each)
described in Section \ref{section:expt_robustness_against_sota} is around 1.0 minutes for CIFAR10, and 2.5 minutes
for ImageNet. 
When the detection framework is executed on a multiprocessor system with four cores, the total processing time is reduced to 0.3 minutes and 1.2 minutes, for CIFAR10 and ImageNet, respectively.

\section{Discussion}

\subsection{On Adaptive Attacks}
A natural question arises: Could an adversary possibly perform an adaptive attack which bypasses the proposed detection? In our empirical evaluations, we observed that while the adaptive attack OARS succeeds in generating adversarial examples, it does so only by introducing a substantial amount of noise, which defeats the purpose of an adversarial attack. We argue that due to the stringent detection criteria employed in Phase 1, any attempt to adaptively bypass the defense would necessitate similarly excessive perturbations, rendering such attacks ineffective in real-world scenarios.

There have been a number of adaptive attacks designed to bypass existing SDs, for example OARS and Certifiable Black-Box Attacks \cite{hong2024certifiableblackboxattacksrandomized}. Some of these methods rely on additional information and resources beyond black-box query access, for example, adversary having knowledge of some model-specific adversarial example distribution (see ``adversarial distribution''  described in Section 3 of \cite{hong2024certifiableblackboxattacksrandomized}). For a black-box model, this is approximated by repeatedly querying a model with similar inputs, which would likely be flagged by existing stateful detection methods. 

\subsection{Comparison With Other Defenses}
Defenses against adversarial attacks are typically categorized into a few main classes: stateful defenses, adversarial training, and inference-time defenses. Stateful defenses (e.g., Blacklight and our approach) aim to restrict adversaries from probing the system in ways that reveal meaningful information about model behavior, often by detecting similar queries.

To the best of our knowledge, Boundary Detection \cite{li2022decision} is the only existing soft-label-based detection method that identifies adversarial intent of a query sequence. If the top-1 and top-2 predicted probabilities are close, the queries are  likely near a decision boundary. However, this criterion may be too late, as the adversary is already close to success and could switch to alternative strategies, such as using different accounts or a surrogate model. In contrast, our method detects suspicious behavior earlier by leveraging temporal correlations in soft-label sequences, which are exhibited early in the query sequence (refer to Figure \ref{fig:shifting labels}).

Adversarial training, in contrast, seeks to harden the model against adversarial inputs by  ``smoothing'' its decision boundaries, usually achieved by training on perturbed \cite{xie2020selftrainingnoisystudentimproves, zhang2024imagenet,erichson2024noisymix} or adversarially generated examples \cite{sui2025isdat, wang2024revisiting}. Inference-time defenses typically mitigate adversarial perturbations by injecting randomness, such as adding noise to input queries \cite{qin2021randomnoisedefensequerybased, cohen2019certified}.

\section{Conclusion}
In this paper, we propose
a two-phase 
framework which leverages both feature similarity and temporal behavior of soft-labels
to identify adversarial query sequences. This is a departure from 
most existing SDs, which rely mainly on query similarity 
to flag out malicious queries. 
We further proposed the notion of ``adversarial attacks'' on similarity functions, 
and highlighted that any similarity function employed must be robust against such attacks. 
We propose that introducing randomness in the similarity
function through \textit{salted randomized quantization} improves its robustness
against such ``adversarial attacks''. 
Finally, our proposed framework has been shown to outperform existing
SOTA adversarial defense methods as it 
correctly identifies adversarial queries with a high
TPR, excludes benign queries with a low FPR,  
and is
robust against adaptive attacks designed
to evade SDs.


\bibliographystyle{ACM-Reference-Format}
\bibliography{ref}

\appendix

\end{document}

\end{document}